\documentclass[runningheads,a4paper]{llncs}

\usepackage{amssymb}
\usepackage{wrapfig}
\setcounter{tocdepth}{3}
\usepackage{graphicx}
\usepackage{cite}
\usepackage{url}
\usepackage{amssymb}
\usepackage{amsmath}
\usepackage{epstopdf}
\usepackage{bm}
\usepackage{multirow}
\usepackage{subcaption}
\usepackage{algpseudocode}
\usepackage{enumitem}
\usepackage[linesnumbered, lined,boxed,commentsnumbered, ruled,norelsize]{algorithm2e}
\usepackage{hyperref}

\urldef{\mailsa}\path||
\urldef{\mailsb}\path||
\urldef{\mailsc}\path||    
\newcommand{\keywords}[1]{\par\addvspace\baselineskip
\noindent\keywordname\enspace\ignorespaces#1}

\begin{document}

\title{A Generic Message Authentication Code: A combination of the Inter MAC and Carter-Wegman MAC}

\author{Chi Tran}
\institute{CTRSEC\\
\email{research@ctrsec.io}}

\titlerunning{Generic MAC}

\maketitle

\begin{abstract}
Message Authentication Code (MAC) is a method for providing integrity and authenticity assurances on the message by allowing the receiver to detect any changes to the message content. In this paper, we present a generic MAC named InterCW-MAC which can prevent replay attack and can deal with a untrusted receiver who may search the secret keys of the sender.
  
\keywords{Message Authentication Code, Homomorphic Message Authentication Code, InterMAC, Carter-Wegman MAC}
\end{abstract}

\section{Introduction}

Consider a scenario in which a sender owns a message $M$ and wants to send $M$ to a receiver. Because $M$ can be modified or loss during being transmitted, the receiver would like to check whether $M$ is intact or not.
A MAC is proposed to provide integrity and authenticity assurances on the message by allowing the receiver (who also possess the shared secret key of the sender) to detect any changes to the message content. A MAC consists of a tuple of algorithms ($\mathsf{KeyGen}$, $\mathsf{Tag}$, $\mathsf{Verify}$) as follows:
\begin{itemize}
\item $\mathsf{KeyGen}(1^\lambda) \rightarrow k$: The sender runs this algorithm which inputs a security parameter $\lambda$, and outputs a secret key $k$. The sender then sends $k$ to the receiver via a secure channel.
\item $\mathsf{Tag}(M, k) \rightarrow t$: The sender runs this algorithm which inputs $k$ and a message $M$, and outputs a tag $t$. 
\item $\mathsf{Verify} (M, t, k) \rightarrow \{0, 1\}$: The receiver runs this algorithm which inputs $M$, $t$ and $k$, and outputs 1 if $t$ is a valid tag and 0 otherwise.
\end{itemize}

\begin{definition}
A MAC is an additive homomorphic MAC if it satisfies:
\begin{equation}
Tag(M_1, k) + Tag(M_2, k) = Tag(M_1 +M_2, k)
\end{equation}
\end{definition}

\begin{definition}
A MAC is a multiplicative homomorphic MAC if it satisfies:
\begin{equation}
Tag(M_1, k) \cdot Tag(M_2, k) = Tag(M_1 \cdot M_2, k)
\end{equation}
\end{definition}

\subsection{Inner-product MAC}
The simplest additive MAC is the inner-product MAC. This MAC consists of the following algorithms:
\begin{itemize}
\item $\mathsf{KeyGen}(1^\lambda) \rightarrow k$: The sender runs this algorithm which inputs a security parameter $\lambda$, and outputs a secret key $k \in \mathbb{F}_q^{\xi}$ where $\mathbb{F}_q^{\xi}$ denotes a $\xi$-dimensional finite field $\mathbb{F}$ of a prime order $q$. The sender then shares $k$ to the receiver via a secure channel.
\item $\mathsf{Tag}(M, k) \rightarrow t$: The sender runs this algorithm which inputs $k \in \mathbb{F}_q^{\xi}$ and a message $M \in \mathbb{F}_q^{\xi}$, and outputs a tag $t$ such that:
\begin{equation}
t= M \cdot k \in \mathbb{F}_q
\end{equation}
The sender then transmits $\{M, t\}$ to the receiver.
\item $\mathsf{Verify} (M, t, k) \rightarrow \{0, 1\}$: The receiver runs this algorithm which inputs $\{M, t, k\}$ and checks if:
\begin{equation}
t \stackrel{?}{=} M \cdot k
\end{equation}
The algorithms outputs 1 if the equality holds ($t$ is a valid tag) and outputs 0 otherwise.
\end{itemize}

\begin{theorem}
The inner-product MAC is a additive homomorphic MAC.
\end{theorem}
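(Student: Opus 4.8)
The plan is to unfold the definition of $\mathsf{Tag}$ for the inner-product MAC and reduce the claimed identity to the distributive law in $\mathbb{F}_q$, applied coordinate by coordinate. Write $M_1 = (m_{1,1}, \dots, m_{1,\xi})$, $M_2 = (m_{2,1}, \dots, m_{2,\xi})$ and $k = (k_1, \dots, k_\xi)$, all in $\mathbb{F}_q^\xi$. By definition, $\mathsf{Tag}(M_i, k) = M_i \cdot k = \sum_{j=1}^{\xi} m_{i,j} k_j \in \mathbb{F}_q$ for $i \in \{1,2\}$, and since $M_1 + M_2 \in \mathbb{F}_q^\xi$ as well, $\mathsf{Tag}(M_1 + M_2, k)$ is well-defined and equals $\sum_{j=1}^{\xi} (m_{1,j} + m_{2,j}) k_j$.

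The main step is then the chain of equalities
\begin{equation}
\mathsf{Tag}(M_1, k) + \mathsf{Tag}(M_2, k) = \sum_{j=1}^{\xi} m_{1,j} k_j + \sum_{j=1}^{\xi} m_{2,j} k_j = \sum_{j=1}^{\xi} (m_{1,j} k_j + m_{2,j} k_j) = \sum_{j=1}^{\xi} (m_{1,j} + m_{2,j}) k_j = \mathsf{Tag}(M_1 + M_2, k),
\end{equation}
where the third equality uses distributivity of multiplication over addition in the field $\mathbb{F}_q$ in each coordinate, and the second and fourth use commutativity and associativity of addition to regroup the finite sum. This establishes the defining equation of an additive homomorphic MAC from the first definition.

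Honestly, there is no real obstacle here: the statement is an immediate consequence of bilinearity of the inner product over $\mathbb{F}_q^\xi$, so the only things worth being careful about are bookkeeping ones — namely that both tags are computed under the \emph{same} key $k$ (the homomorphism is in the message argument, not the key), and that the ambient structure is genuinely a field (or at least a commutative ring) so that the coordinate-wise distributive law applies. I would state the proof in the two displays above and conclude.
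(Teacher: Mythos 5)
Your proof is correct and follows essentially the same route as the paper's: unfold $\mathsf{Tag}$ and use linearity of the inner product in the message argument, i.e.\ $M_1 \cdot k + M_2 \cdot k = (M_1 + M_2)\cdot k$. The only difference is that you justify this step coordinate-wise via distributivity in $\mathbb{F}_q$, which the paper simply asserts, so your write-up is a slightly more detailed version of the same argument.
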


\begin{proof}

\begin{itemize}
\item $Tag(M_1, k) =M_1 \cdot k$
\item $Tag(M_2, k) =M_2 \cdot k$
\item $Tag(M_1, k) + Tag(M_2, k) = (M_1 + M_2) \cdot k = Tag (M_1 + M_2, k)$ \qed
\end{itemize}
\end{proof}

\subsubsection{Security Analysis.} 
The inner-product MAC is secured from the brute forge search if $p$ is chosen large enough. This is because $k \in \mathbb{F}_q^{\xi}$, the probability to find $k$ via a brute force search is $1/q^{\xi}$. If $q$ is chosen large enough (i.e., 160 bits), the probability is $1/(2^{160})^{\xi}$, which is negligible.  

However, this MAC is not secured from the \emph{replay attack} because:
\begin{itemize}
\item In the first transmission: 
\begin{itemize}
\item The sender sends $\{M, t = M \cdot k\}$ to the receiver. 
\item The attacker captures $M$ and $t$ when they are transmitted.
\item The receiver verifies iff: $t = M \cdot k$. Suppose that the equality holds ($M$ is intact), the receiver then outputs 1.
\end{itemize}
\item In the next transmission:
\begin{itemize}
\item The sender sends $\{M', t' = M' \cdot k\}$ to the receiver.
\item The attacker re-uses the old message $M$ and the old tag $t$. It is clear that:
\begin{equation}
\label{equa:innerproductMAC}
\frac{t}{M} = \frac{t'}{M'}
\end{equation}
\item The attacker drops $\{M', t'\}$ and sends $\{M, t\}$ to the receiver.
\item The receiver verifies iff: $t = M \cdot k$. This equality will hold because of Eq. \ref{equa:innerproductMAC} but the receiver cannot know that the message was replaced.
\end{itemize}
\end{itemize}

\subsection{Carter-Wegman MAC}

To address replay attack which is a drawback of the inner-product MAC, the Wegman-Carten MAC \cite{CWMAC-STOC1977, CWMAC-Journal1979} has been proposed with an additional pseudorandom function. This MAC consists of the following algorithms:

\begin{itemize}
\item $\mathsf{KeyGen}(1^\lambda) \rightarrow \{k, k'\}$: The sender runs this algorithm which inputs a security parameter $\lambda$, and outputs a secret key $k \in \mathbb{F}_q^{\xi}$ which is used for tagging the message $M  \in \mathbb{F}_q^{\xi}$, a secret key $k' \in \mathcal{K'}$ which is used for permuting the tag. After that, the sender shares $\{k, k'\}$ to the receiver via a secure channel.

\item $\mathsf{Tag}(M, k, k') \rightarrow t$: The sender runs this algorithm which inputs the message $M$ and $\{k, k'\}$, and outputs a tag $t$ such that:
\begin{equation}
t = M \cdot k + f_{k'}(r) \in \mathbb{F}_q
\end{equation}
where $r$ denotes a random value and $f$ denotes a pseudorandom function such that $f: \mathcal{K'} \times \mathcal{R} \rightarrow \mathbb{F}_q$ ($\mathcal{K'}$ is the space of $k'$ and $\mathcal{R}$ is the space of $r$). The sender then transmits $\{M, t\}$ to the receiver. Note that (i) $f$ can be public, and (ii) $r$ is re-generated very transmission and is transmitted to the receiver via a secure channel like the keys (or is chosen large enough).
\item $\mathsf{Verify}(M, t, k, k', r, f) \rightarrow \{0, 1\}$: The receiver runs this algorithm which inputs $\{M, t, k, k', r, f\}$ and checks if:
\begin{equation}
t \stackrel{?}{=} M \cdot k + f_{k'}(r)
\end{equation}
The algorithm outputs 1 if the equality holds ($t$ is a valid tag), and outputs 0 otherwise.
\end{itemize}

\subsubsection{Security Analysis.}
Similar to the inner-product MAC, the keys of the Wegman-Carten MAC is not disclosed from the brute force search if $q$ and $\mathcal{K'}$ are chosen large enough. Furthermore, the Wegman-Carten MAC is secured from the replay attack:
\begin{itemize}
\item In the first transmission: 
\begin{itemize}
\item The sender sends $\{M, t = M \cdot k + f_{k'}(r)\}$ to the receiver. 
\item The attacker captures $M$ and $t$ when they are transmitted.
\item The receiver verifies iff: $t = M \cdot k + f_{k'}(r)$. Suppose that the equality holds (which means that $M$ is intact), the receiver then outputs 1.
\end{itemize}
\item In the next transmission:
\begin{itemize}
\item The sender sends $\{M', t' = M' \cdot k + f_{k'}(r')\}$ to the receiver.
\item The attacker re-uses the old message $M$ and the old tag $t$. It is clear that:
\begin{equation}
\label{equa:CW-MAC}
\frac{t}{M} \neq \frac{t'}{M'}
\end{equation}
\item The attacker drops $\{M', t'\}$ and sends $\{M, t\}$ to the receiver.
\item The receiver verifies iff: $t = M \cdot k$. This equality will not hold because of Eq. \ref{equa:CW-MAC}. The receiver then outputs 0. 
\end{itemize}
\end{itemize}

\subsection{Inter MAC}

Most of the MACs consider a scenario in which the sender sends a message along with the corresponding tag to the receiver. The receiver can check whether the message is intact using a shared key with the sender. However, the receiver can be untrusted (the sender should not share his/her secret key to the receiver). To address this problem, the inter MAC has been proposed~\cite{InterMAC-INFOCOM2012, Elar-1, COCOON, IEICE, MD-POR} as follows:

\begin{itemize}
\item $\mathsf{KeyGen}(1^\lambda, M) \rightarrow \{k, k'\}$: The sender runs this algorithm which inputs a security parameter $\lambda$ and a message $M \in \mathbb{F}_q^{\xi}$, and outputs secret keys $\{k \in \mathbb{F}_q^{\xi}, k' \in \mathbb{F}_q^{\xi}\}$ where $M \cdot k' = 0$. The sender then computes $k'' = k+ k'$, and sends $k''$ to the receiver.

\item $\mathsf{Tag}(M, k) \rightarrow t$: The sender runs this algorithm which inputs the message $M \in \mathbb{F}_q^{\xi} $ and the secret key $k \in \mathbb{F}_q^{\xi}$, and outputs a tag $t$ such that:
\begin{equation}
t= M \cdot k  \in \mathbb{F}_q
\end{equation}
The sender then transmits $\{M, t\}$ to the receiver.
\item $\mathsf{Verify} (M, t, k'') \rightarrow \{0, 1\}$: The receiver runs this algorithm which inputs $\{M, t, k''\}$ and checks if:
\begin{equation}
t \stackrel{?}{=} M \cdot (k + k')
\end{equation}
The algorithm outputs 1 if the equality holds ($t$ is a valid tag) and outputs 0 otherwise. 
\end{itemize}

\begin{theorem}
The inter MAC is a additive homomorphic MAC.
\end{theorem}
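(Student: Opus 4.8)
The plan is to mirror the proof already given for the inner-product MAC, because the $\mathsf{Tag}$ algorithm of the inter MAC is literally the inner-product tag $t = M \cdot k$; the inter MAC differs from it only in how keys are generated and how verification is performed, neither of which enters the statement of Definition~1. So first I would record the two tags computed under a common tagging key $k \in \mathbb{F}_q^{\xi}$: $\mathsf{Tag}(M_1, k) = M_1 \cdot k$ and $\mathsf{Tag}(M_2, k) = M_2 \cdot k$.

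The core step is then a one-line appeal to distributivity of the inner product over vector addition in $\mathbb{F}_q^{\xi}$: $M_1 \cdot k + M_2 \cdot k = (M_1 + M_2) \cdot k$, where $M_1 + M_2$ is the componentwise sum in $\mathbb{F}_q^{\xi}$ and the addition on the left and the arithmetic on the right both take place in $\mathbb{F}_q$. Recognising the right-hand side as $\mathsf{Tag}(M_1 + M_2, k)$ gives $\mathsf{Tag}(M_1, k) + \mathsf{Tag}(M_2, k) = \mathsf{Tag}(M_1 + M_2, k)$, which is exactly the additive-homomorphism condition.

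The only point requiring care — and I would flag it as a subtlety rather than a genuine obstacle — is that in the inter MAC the key-generation algorithm is message-dependent: it outputs $k'$ with $M \cdot k' = 0$ and gives the receiver $k'' = k + k'$. I would observe that Definition~1 constrains only the $\mathsf{Tag}$ map for a \emph{fixed} tagging key $k$, and that $k$ is sampled independently of the message, so the computation above is unaffected. If one moreover wants the verification relation $t \stackrel{?}{=} M \cdot (k+k')$ to behave homomorphically, it suffices to take the masking component consistently — a single $k'$ annihilating both $M_1$ and $M_2$, whence $(M_1 + M_2) \cdot k' = 0$ again by distributivity. No nontrivial computation is involved; once this bookkeeping is stated, the proof is the same three lines as for the inner-product MAC.
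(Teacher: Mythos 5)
Your proof is correct and takes essentially the same approach as the paper: both reduce the claim to a one-line appeal to distributivity of the inner product over vector addition in $\mathbb{F}_q^{\xi}$. The only cosmetic difference is that the paper writes the tags with the combined key $k+k'$ whereas you use the tagging key $k$ and separately flag that a common $k'$ orthogonal to both messages keeps verification consistent --- the same three-line computation either way.
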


\begin{proof}

\begin{itemize}
\item $Tag(M_1, k+k') =M_1 \cdot (k+k')$
\item $Tag(M_2, k+k') =M_2 \cdot (k+k')$
\item $Tag(M_1, k+k') + Tag(M_2, k+k') = (M_1 + M_2) \cdot (k+k') = Tag (M_1 + M_2, k+k')$ \qed
\end{itemize}
\end{proof}

\subsubsection{Security Analysis.}
Given $k''$, which is the summation $(k + k')$,  the receiver cannot obtain the secret keys $k$ and $k'$ of the sender. This is because $k, k' \in \mathbb{F}_q^{\xi}$, the probability for an attacker to search $k$ (or $k'$) via a brute force search and then to obtain $k' = k'' - k$ (or $k = k'' - k'$) is $1/q^{\xi}$. If $q$ is chosen large enough (i.e., 160 bits), the probability to find $k$ and $k'$ is $1/2^{160}$, which is negligible. However, similar to the inner-product MAC, this inter MAC cannot be secured from replay attack.

\section{The Proposed InterCW-MAC} 
The InterCW-MAC is a combination between the Carter-Wegman MAC and the inter MAC. The InterCW-MAC is proposed to deal with two problems of the Carter-Wegman MAC and the inter MAC: preventing the replay attack and preventing the untrusted receiver to learn the secret key of the sender. 

\subsection{Construction}
The InterCW-MAC consists of the following algorithms:
\begin{itemize}
\item $\mathsf{KeyGen}(1^\lambda, M) \rightarrow \{k_1, k_1', k_2\}$: The sender runs this algorithm which inputs a security parameter $\lambda$, and outputs a secret key $k_1, k'_1 \in \mathbb{F}_q^{\xi}$ which is used for tagging the message $M$, and a key $k_2 \in \mathcal{K'}$ which is used for permuting the tag. The sender then computes $k''_1 = k_1 + k'_1$. The sender sends $k''_1$ and $k_2$ to the receiver via a secure channel.
\item $\mathsf{Tag}(M, k_1, k_2) \rightarrow t$: The sender runs this algorithm which inputs a message $M$ and $\{k_1, k_2\}$, and outputs a tag $t$ such that:
\begin{equation}
t = M \cdot k_1 + f_{k_2}(r) \in \mathbb{F}_q
\end{equation}
where $r$ denotes a random value and $f$ denotes a pseudorandom function such that $f: \mathcal{K'} \times \mathcal{R} \rightarrow \mathbb{F}_q$ ($\mathcal{K'}$ is the space of $k'$ and $\mathcal{R}$ is the space of $r$). The sender then transmits $\{M, t\}$ to the receiver. Note that (i) $f$ can be public, and (ii) $r$ is re-generated very transmission and is transmitted to the receiver via a secure channel like the keys (or is chosen large enough).
\item $\mathsf{Verify}(M, t, k''_1, k_2, r, f) \rightarrow \{0, 1\}$: The receiver runs this algorithm which inputs $\{M, t, k''_1, k_2, r, f\}$ and checks if:
\begin{equation}
t \stackrel{?}{=} M \cdot k''_1 + f_{k_2}(r)
\end{equation}
The algorithm outputs 1 if the equality holds ($t$ is a valid tag), and outputs 0 otherwise.
\end{itemize}

\subsection{Application of InterCW-MAC} 
We believe the InterCW-MAC can be applied in several scenarios. In this section, we describe an application of the InterCW-MAC in network coding-based distributed storage system. 

\subsubsection{Network coding-based distributed storage system.}
Network coding has been applied to distributed storage system \cite{NC-POR-IEEE2010, NC-POR-CCSW2010, NC-POR-IEEE2014, NC-POR-NetCod2012}. In this scenario, the system model consists of two types of entities: a client (trusted) and servers (untrusted). Suppose that a client owns an original file $F$ which consists of $m$ file blocks: $F = v_1|| \cdots ||v_m$. $v_i \in \mathbb{F}_q^{\xi}$ where $i\in \{1, \cdots, m\}$. The client wants to store redundantly encoded blocks in the servers in a way that the client can reconstructs the original file $F$ and can repair the encoded blocks in a corrupted server. From these file blocks, the client firstly creates $m$ augmented blocks $\{w_1, \cdots, w_m\}$ in which $w_i \in \mathbb{F}^{\xi+m}_q$ where $i \in \{1, \cdots, m\}$ has the form as follows:
\begin{equation}
\label{equation:networkcoding-fundamentalconcept}
w_{i} = (v_i, \overbrace{\underbrace{0, \cdots, 0, 1}_{i}, 0, \cdots, 0}^m) \in \mathbb{F}^{\xi+m}_q
\end{equation}

The client then randomly chooses $m$ coding coefficients $\alpha_1, \cdots, \alpha_m \stackrel{rand}{\leftarrow} \mathbb{F}_q$ and computes coded blocks using the linear combination as follows:
\begin{equation}
c=\sum_{i=1}^m \alpha_i \cdot w_i \in \mathbb{F}^{\xi+m}_q
\end{equation}

The coded blocks are then stored these coded blocks in the servers. To reconstruct the original file $F$, any $m$ coded blocks are required to solve $m$ augmented blocks $w_1, \cdots, w_m$ using the accumulated coefficients contained in the last $m$ coordinates of each coded block. After these $m$ augmented blocks are solved, $m$ file blocks $v_1, \cdots,v_m$ are obtained from the first coordinate of each augmented block. Finally, the original file $F$ is reconstructed by concatenating the file blocks. Note that the matrix consisting of the coefficients used to construct any $m$ coded blocks should have full rank. Koetter et al. \cite{NC-JournalTON2003} proved that if the prime $q$ is chosen large enough and the coefficients are chosen randomly, the probability for the matrix having full rank is high.

\subsubsection{Checking the data stored in the servers.}
Because the servers may be untrusted, the client must check the servers periodically to ensure that his/her data stored in the servers is always available and intact. In this case, the client can use the same secret keys to tag and to verify. 

However, when the client does not want to be burdened in checking the servers periodically, the client can delegate this task to another entity called \emph{third party verifier} (\emph{verifier} for short), which may be also untrusted. This entity is supposed to not collude with the servers. To deal with this scenario, we can apply the InterCW-MAC. 

\subsubsection{How to apply InterCW-MAC.} 
The InterCW-MAC can be applied to the above scenario as follows: 

\paragraph{a)}
$\mathsf{KeyGen}(1^\lambda, \{w_1, \cdots, w_m\}) \rightarrow \{k_1, k'_1, k_2\}$: The client runs this algorithm which inputs a security parameter $\lambda$ and the set of $m$ augmented blocks $\{w_1, \cdots, w_m\}$, and outputs secret keys $\{k_1, k_2\}$ as follows:
\begin{itemize}
\item $k_1, k_2 \stackrel{rand}{\leftarrow} \mathbb{F}^{\xi+m}_q$.
\item $k'_1 \in \mathbb{F}^{\xi+m}_q$ such that $w_i \cdot k'_1 = 0$ for all $i \in \{1, \cdots, m\}$.
\end{itemize}

The client then computes $k''_1 = k_1 +k'_1$. The client sends $k''_1$ and $k_2$ to the verifier.

The $\mathsf{KeyGen}$ introduces a challenge that how to generate $k_1'$ such that it is orthogonal to all $m$ augmented blocks. Formally, $k'_1 \cdot w_i = 0$ for all $i \in \{1, \cdots, m\}$. The algorithm to generate $k'$ is given as follows.
\begin{itemize}
\item $\mathsf{OrthogonalGen \text{--} SS}$ $(w_1, \cdots, w_m)$ $\rightarrow$ $k'_1$:
\begin{itemize}
\item Find the span $\pi$ of $w_1, \cdots, w_m \in \mathbb{F}_q^{\xi+m}$.
\item Construct the matrix $M$ in which $\{w_1, \cdots, w_m\}$ are the rows of $M$.
\item Find the null-space of $M$, denoted by $\pi^{\perp}_M$, which is the set of all vectors $u \in \mathbb{F}^{\xi+m}_q$ such that $M\cdot u^T = 0$.
\item Find the basis vectors of $\pi^{\perp}_M$, denoted by $b_1, \cdots, b_{\xi} \in \mathbb{F}^{\xi+m}_q$ // Theorem \ref{theorem:intermac1} will explain why the number of the basis vectors is $\xi$.
\item Let $B = \{b_1, \cdots, b_{\xi}\}$
\item Compute $k'_1 \leftarrow \mathsf{Kg \text{--} SS}(B)$.
\end{itemize}

\item  $\mathsf{Kg} (B = \{b_1, \cdots, b_{\xi}\}) \rightarrow k'$: this is the sub-algorithm used in $\mathsf{OrthogonalGen \text{--} SS}$
\begin{itemize}
\item Let $f$ be a Pseudorandom function such that $\mathcal{K} \times [1, \xi] \rightarrow \mathbb{F}_q$.
\item Generate $r_x \leftarrow f(k_{PRF}, x) \in \mathbb{F}_q, \forall x \in \{1, \cdots, z\}$ where $k_{PRF} \in \mathcal{K}$.
\item Compute $k'_1 \leftarrow \sum^z_{x = 1} r_x \cdot b_x \in \mathbb{F}_q^{\xi+m}$.
\end{itemize}
\end{itemize}

\begin{theorem}
\label{theorem:intermac1}
Given $\{w_1, \cdots, w_m\} \in \mathbb{F}_q^{\xi+m}$, the number of basis vectors of $\pi^{\perp}_M$ is $\xi$.
\end{theorem}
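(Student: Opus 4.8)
The plan is to exploit the special block structure of the augmented blocks rather than to reason about $M$ abstractly. By Eq.~\ref{equation:networkcoding-fundamentalconcept}, each $w_i$ decomposes as $(v_i, e_i)$ with $v_i \in \mathbb{F}_q^{\xi}$ and $e_i \in \mathbb{F}_q^{m}$ the $i$-th standard unit vector; stacking the $w_i$ as rows therefore yields a matrix of the block form $M = [\,V \mid I_m\,]$, where $V$ is the $m \times \xi$ matrix with rows $v_1, \dots, v_m$ and $I_m$ is the $m \times m$ identity matrix. I would open the proof by making this observation explicit, since everything else follows from it.

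Next I would pin down $\mathrm{rank}(M)$. Since $M$ has $m$ rows we have $\mathrm{rank}(M) \le m$; since its last $m$ columns are exactly $I_m$, which already has rank $m$, we also have $\mathrm{rank}(M) \ge m$. Hence $\mathrm{rank}(M) = m$ for every choice of $v_1, \dots, v_m$, i.e. the augmented blocks are automatically linearly independent. Then I would invoke the rank-nullity theorem for the linear map $u \mapsto M u^{T} : \mathbb{F}_q^{\xi+m} \to \mathbb{F}_q^{m}$, whose kernel is precisely $\pi^{\perp}_M$: this gives $\dim \pi^{\perp}_M = (\xi + m) - \mathrm{rank}(M) = \xi$. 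Finally, since every basis of a finite-dimensional vector space has cardinality equal to its dimension, $\pi^{\perp}_M$ has exactly $\xi$ basis vectors $b_1, \dots, b_\xi$, which is the claim.

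Honestly, I do not expect a genuine obstacle here: the argument is a one-line rank computation followed by rank-nullity. The only step that deserves to be stated with care is $\mathrm{rank}(M) = m$, because it is exactly here that the identity block appended to each $v_i$ in Eq.~\ref{equation:networkcoding-fundamentalconcept} does the work — without the augmentation (for instance if two of the $v_i$ coincided) the rank, and hence the basis count, could be smaller. It is also worth a brief remark that rank-nullity holds verbatim over the finite field $\mathbb{F}_q$ just as over $\mathbb{R}$, so nothing is lost by working modulo $q$; this justifies using the theorem in the $\mathsf{OrthogonalGen\text{--}SS}$ setting where all arithmetic is in $\mathbb{F}_q$.
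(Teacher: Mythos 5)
Your argument is correct and follows essentially the same route as the paper: establish $\mathrm{rank}(M) = m$ and apply rank--nullity to the $m \times (\xi+m)$ matrix to get $\dim(\pi^{\perp}_M) = (\xi+m) - m = \xi$. The only difference is that you justify $\mathrm{rank}(M) = m$ explicitly via the identity block $I_m$ coming from Eq.~\ref{equation:networkcoding-fundamentalconcept}, whereas the paper simply asserts it; your version is slightly more complete on that point.
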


\begin{proof}
$\mathsf{rank}(M) = m$. Let $\pi_M$ be the space spanned by the rows of $M$. For any $m \times (\xi+m)$ matrix, the rank-nullity theorem gives: 
\begin{equation}
\mathsf{rank}(M) + \mathsf{nullity}(M) = \xi +m
\end{equation}
where $\mathsf{nullity}(M)$ is the dimension of $\pi^{\perp}_M$. Therefore,
\begin{equation}
\mathsf{dim}(\pi^{\perp}_M) = (\xi +m) - m = \xi
\end{equation}
In other words, the number of basis vectors of $\pi^{\perp}_M$ is $\xi$. In the $\mathsf{OrthogonalGen}$, we denoted the basis vectors by $\{b_1, \cdots, b_{\xi}\}$.  \qed
\end{proof}

\paragraph{b)}
$\mathsf{Tag}(c_{uv}, k_1, k_2) \rightarrow \{t_1, \cdots, t_m\}$: The client runs this algorithm which inputs $c_{uv}$ and $\{k_1, k_2\}$ where $c_{uv}$ is the $v$-th coded block in the $u$-th server. $c_{uv}$ is computed as a linear combination of $\{w_1, \cdots, w_m\}$ using network coding: $c_{uv} = \sum^m_{i=1} \alpha_i w_i$. The algorithm outputs the tag $t_{uv}$ for $c_{uv}$ such that:
\begin{equation}
t_{uv} = c_{uv} \cdot k_1 + f_{k_2}(u || v) \in \mathbb{F}_q
\end{equation}
for all $i \in \{1, \cdots, m\}$. $u$ denotes the server index and $v$ denotes the coded block index in a server. Suppose the number of servers is $n$ ($u \in \{1, \cdots, n\}$). Suppose the number of coded block in a server is $d$ ($v \in \{1, \cdots, d\}$).

\paragraph{c)}
$\mathsf{Verify} (c_{S_u}, t_{S_u}, k''_1, k_2, S_u) \rightarrow \{0, 1\}$: The verifier runs this algorithm to check a server $S_u$ where $u \in \{1, \cdots, n\}$. The algorithm inputs $\{c_{S_u}, t_{S_u}, k''_1, k_2\}$ where $c_{S_u}$ and $t_{S_u}$ are the linear combinations of $\{c_{u1}, \cdots, c_{ud}\}$ and $\{t_{u1}, \cdots, t_{ud}\}$, respectively. Namely, $c_{S_u} = \sum^d_{v=1} \beta_{uv} c_{uv}$ and $t_{S_u} = \sum^d_{v=1} \beta_{uv} t_{uv}$. The algorithm checks if:
\begin{equation}
\label{equa:verify}
t_{S_u} \stackrel{?}{=} c_{S_u} \cdot k''_1 + \sum^d_{v=1} \beta_{uv} f_{k_2}(u || v)
\end{equation}
This algorithm outputs 1 if the equality holds ($S_u$ is healthy), and outputs 0 otherwise.

The correctness of Eq. \ref{equa:verify} is proved as follows:
\begin{proof}
$ $\linebreak
\begin{tabular}{l l l}
$t_{S_u}$ & = & $ \sum^d_{v=1} \beta_{uv} t_{uv}$\\
& = & $ \sum^d_{v=1} \beta_{uv} (c_{uv} k_1 + f_{k_2}(u || v))$\\
& = & $\sum^d_{v=1} \beta_{uv} c_{uv} k_1 + \sum^d_{v=1} \beta_{uv} f_{k_2}(u || v)$\\
& = & $\sum^d_{v=1} \sum^m_{i=1}  \beta_{uv} \alpha_i w_i k_1 +\sum^d_{v=1} \beta_{uv} f_{k_2}(u||v)$\\
& = & $\sum^d_{v=1} \sum^m_{i=1}  \beta_{uv} \alpha_i w_i (k_1+k'_1) +\sum^d_{v=1} \beta_{uv} f_{k_2}(u||v)$ // because $w_i k'_1 = 0$\\
& = & $\sum^d_{v=1} \beta_{uv} c_{uv}(k_1+k'_1) +\sum^d_{v=1} \beta_{uv} f_{k_2}(u||v)$\\
& = & $c_{S_u}(k_1+k'_1) +\sum^d_{v=1} \beta_{uv} f_{k_2}(u||v)$\\
\end{tabular}
\qed
\end{proof}

\section{Future work}
In the proposed InterCW-MAC, the verifier is given $k'' = k_1 +  k'_1 \in \mathbb{F}^{\xi+m}_q$ and $k_2 \in \mathbb{F}^{\xi+m}_q$. The probability for the verifier to search each secret key $k_1, k_2$ is $1/q^{\xi+m}$. This probability can be reduced if the verifier is given $k''= k_1 +  k'_1 \in \mathbb{F}^{\xi+m}_q$ and $k_2'' = k_2 +  k'_2 \in \mathbb{F}^{\xi+m}_q$. For future work, a potential solution is to design a pseudo-random function $f_2$ such that it is homomorphic (to suit network coding) and such that given $f_2$, $k_2$ and $k'_2$ cannot be obtained.

\end{document}